\documentclass{llncs}
\usepackage{graphicx}

\usepackage{cite,latexsym,graphicx,wrapfig,fancyhdr,url,subfigure,times,mathptm}


\DeclareSymbolFont{AMSb}{U}{msb}{m}{n}
\DeclareSymbolFontAlphabet{\Bbb}{AMSb}

\def\Z{\ensuremath{\Bbb Z}}
\def\bot{{-\infty}}	
\def\top{{+\infty}}

\newcommand{\graph}{\mathcal{G}}
\newcommand{\layout}{\mathcal{L}}

\newcommand{\porder}{\mathcal{P}}
\newcommand{\qorder}{\mathcal{Q}}
\newcommand {\REL} {regular edge labeling}

\newtheorem{cor}{Corollary}

\graphicspath{{figures/}}


\title{Orientation-Constrained Rectangular Layouts}

\author{David Eppstein \inst{1} \and Elena Mumford \inst{2}}

\institute{Department of Computer Science, University of California, Irvine, USA
\and
Department of Mathematics and Computer Science, TU
Eindhoven, The Netherlands
}


\begin{document}
\maketitle

\begin{abstract}
We construct partitions of rectangles into smaller rectangles from an input consisting of a planar dual graph of the layout together with restrictions on the orientations of edges and junctions of the layout. Such an orientation-constrained layout, if it exists, may be constructed in polynomial time, and all orientation-constrained layouts may be listed in polynomial time per layout.
\end{abstract}


\section{Introduction}\label{sec:def}
Consider a partition of a rectangle into smaller rectangles, at most three of which meet at any point. We call such a partition a \emph{rectangular layout}.
Rectangular layouts are an important tool in many application areas. In VLSI design rectangular layouts represent floorplans of integrated circuits~\cite{LiaLuYen-Algs-03}, while in architectural design they represent floorplans of buildings~\cite{Earl1979,Rinsma1988}. In cartography they are used to visualize numeric data about geographic regions, by stylizing the shapes of a set of regions to become rectangles, with areas chosen to represent statistical data about the regions; such visualizations are called \emph{rectangular cartograms}, and were first introduced in 1934 by Raisz~\cite{Rai-GR-34}.

The \emph{dual graph} or \emph{adjacency graph} of a layout is a plane graph $\graph(\layout)$ that has a vertex for every region of $\layout$ and an edge for every two adjacent regions. In both VLSI design and in cartogram construction, the adjacency graph $\graph$ is typically given as input, and one has to construct its \emph{rectangular dual}, a rectangular layout for which $\graph$ is the dual graph. Necessary and sufficient conditions for a graph to have a rectangular dual are known~\cite{KozKin-Nw-85}, but
graphs that admit a rectangular dual often admit more than one. This fact allows us to impose additional requirements on the rectangular duals that we select, but it also leads to difficult algorithmic questions concerning problems of finding layouts with desired properties. For example, Eppstein et al~\cite{area-uni} have considered the search for \emph{area-universal} layouts, layouts that can be turned into rectangular cartograms for any assignment of positive weights to their regions.

In this paper, we consider another kind of constrained layouts. Given a graph $\graph$ we would like to know whether $\graph$ has a rectangular dual that satisfies certain constraints on the orientations of the adjacencies of its regions; such constraints may be particularly relevant for cartographic applications of these layouts. For example, in a cartogram of the U.S., we might require that a rectangle representing Nevada be right of or above a rectangle representing California, as geographically Nevada is east and north of California. We show that layouts with orientation constraints of this type may be constructed in polynomial time.  Further, we can list all layouts obeying the constraints in polynomial time per layout. Our algorithms can handle constraints (such as the one above) limiting the allowed orientations of a shared edge between a pair of adjacent regions, as well as more general kind of constraints restricting the possible orientations of the three rectangles meeting at  any junction of the layout.  We also discuss the problem of finding area-universal layouts in the presence of constraints of these types. A version of the orientation-restricted layout problem was previously considered by van Kreveld and Speckmann~\cite{KreSpe-CGTA-07} but they required a more restrictive set of constraints and searched exhaustively through all layouts rather than developing polynomial time algorithms.

Following \cite{area-uni}, we use Birkhoff's representation theorem for finite distributive lattices to associate the layouts dual to $\graph$ with partitions of a related partial order into a lower set and an upper set. The main idea of our new algorithms is to translate the orientation constraints of our problem into an equivalence relation on this partial order. We form a quasiorder by combining this relation with the original partial order, partition the quasiorder into lower and upper sets, and construct layouts from these partitions. However, the theory as outlined above only works directly on dual graphs with no nontrivial separating 4-cycles. To handle the general case we must do more work to partition $\graph$ by its 4-cycles into subgraphs and to piece together the solutions from each subgraph.

\section{Preliminaries}

Kozminski and Kinnen~\cite{KozKin-Nw-85} demonstrated that a plane triangulated graph $\graph$ has a rectangular dual if it can be augmented with four external vertices $\{l,t,r,b\}$ to obtain an \emph{extended graph} $E(\graph)$ in which every inner face is a triangle, the outer face is a quadrilateral, and $E(\graph)$ does not contain any separating 3-cycles (a \emph{separating $k$-cycle} is a $k$-cycle that has vertices both inside and outside of it). A graph $\graph$ that can be extended in this way is said to be \emph{proper}---see Fig.~\ref{fig:rect-dual} for an example.
\begin{figure}[t]
  \centering
  \includegraphics[scale=0.8]{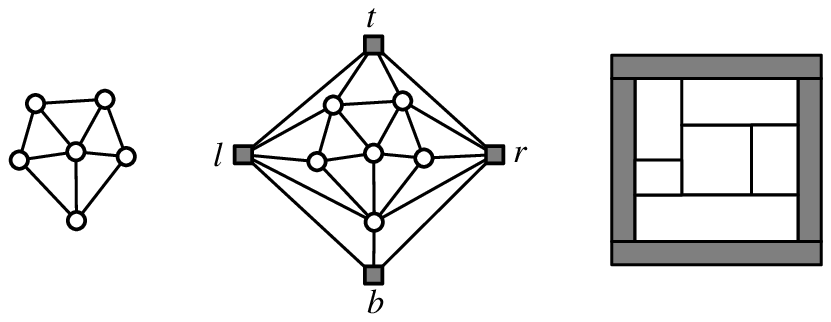}
  \caption{A proper graph $\graph$, extended graph $E(\graph)$, and rectangular dual $\layout$ of $E(\graph)$, from~\cite{area-uni}.}
\label{fig:rect-dual}
\end{figure}
The extended graph $E(\graph)$ is sometimes referred to as a \emph{corner assignment} of $\graph$, since it defines which vertices of $\graph$ become corner rectangles of the corresponding dual layout.  In the rest of the paper we assume that we are given a proper graph with a corner assignment. For proper graphs without a corner assignment, one can always test all possible corner assignments, as their number is polynomial in the number of external vertices of the graph.

A rectangular dual $\layout$ induces a labeling for the edges of its graph $\graph(\layout)$: we color each edge blue if the corresponding pair of rectangles share a vertical line, or red if the corresponding pair of rectangles share a horizontal border; we direct the blue edges from left to right and red edges from bottom to top. For each inner vertex $v$ of $\graph(\layout)$ the incident edges with the same label form continuous blocks around $v$: all incoming blue edges are followed (in clockwise order) by all outgoing red, all  outgoing blue and finally all incoming red edges.  All edges adjacent to one of the four external vertices $l,t,r,b$ have a single label.
A labeling that satisfies these properties is called a \emph{regular edge labeling}~\cite{KanHe-TCS-97}. Each regular edge labeling of a proper graph corresponds to an equivalence class of rectangular duals  of $\graph$, considering two duals to be \emph{equivalent} whenever every pair of adjacent regions have the same type of adjacency in both duals.

\subsection{The distributive lattice of  {\REL}s}

A \emph{distributive lattice} is a partially ordered set in which every pair $(a,b)$ of elements has a unique supremum $a \wedge b$ (called the \emph{meet} of $a$ and $b$) and a unique infinum $a \vee b$ (called the \emph{join} of $a$ and $b$) and where the join and meet operations are distributive over each other. Two comparable elements that are closest neighbours in the order are said to be a \emph{covering pair}, the larger one is said to \emph{cover} the smaller one.

\begin{figure}[t]
\centering\includegraphics[width=.4\textwidth]{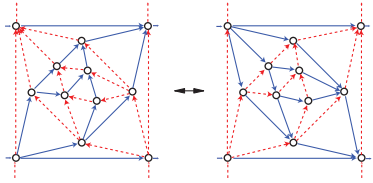}
\caption{Recoloring the interior of an alternatingly-colored four-cycle in a regular edge labeling.}
\label{fig:edge-move}
\end{figure}

All regular edge labelings of a proper graph form a distributive lattice~\cite{Fus-GD-05,Fus-DM-08,TanChe-ISCAS-90} in which the covering pairs of layouts are the pairs in which the layouts can be transformed from one to the other by means of a \emph{move}---changing the labeling of the edges inside an \emph{alternating four-cycle} (a 4-cycle $C$ whose edge colors alternate along the cycle). Each red edge within the cycle becomes blue and vice versa; the orientations of the edges are adjusted in a unique way such that the cyclic order of the edges around each vertex is as defined above---see Fig.~\ref{fig:edge-move} for an example. In terms of layouts, the move means rotating the sublayout formed by the inner vertices of $C$ by 90 degrees.  A move is called \emph{clockwise} if the borders between the red and blue labels of each of the four vertices of the cycle move clockwise by the move, and called \emph{counterclockwise} otherwise. A counterclockwise move transforms a layout into another layout higher up the lattice.

We can represent the lattice by a graph in which each vertex represents a single layout and each edge represents a move between a covering pair of layouts, directed from the lower layout to the higher one. We define a \emph{monotone path} to be a path in this graph corresponding to a sequence of counterclockwise moves.

\subsection{The Birkhoff representation of the lattice of layouts}

\begin{figure}[t]
\centering
\includegraphics[width=3.5in]{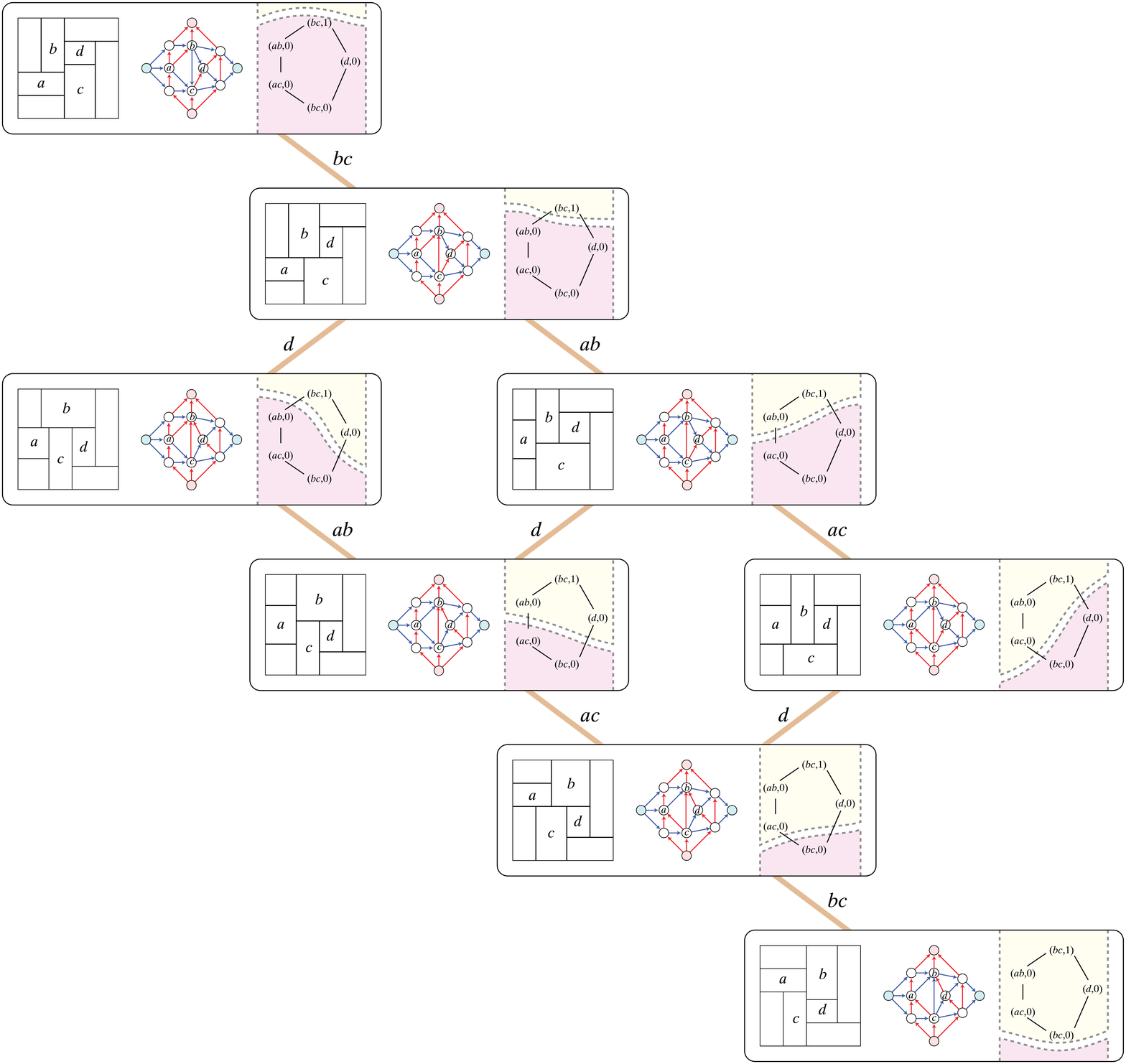}
\vspace{-0.75\baselineskip}
\caption{The rectangular layouts dual to a given extended graph $E(\graph)$ and the corresponding regular edge labelings and partial order partitions. Two layouts are shown connected to each other by an edge if they differ by reversing the color within a single alternatingly-colored four-cycle; these moves are labeled by the edge or vertex within the four-cycle. From~\cite{area-uni}.}
\label{fig:8layouts}
\end{figure}

For any finite distributive lattice $D$, let $\porder$ be the partial order of join-irreducible elements (elements that cover only one other element of $D$), and let $J(\porder)$ be the lattice of partitions of $\porder$ into sets $L$ and $U$, where $L$ is downward closed and $U$ is upward closed and where meets and joins in $J(\porder)$ are defined as intersections and unions of these sets. Birkhoff's representation theorem~\cite{Bir-DMJ-37} states that $D$ is isomorphic to $J(\porder)$.

Eppstein et al.~\cite{area-uni} show that when $E(\graph)$ has no nontrivial separating four-cycles (four-cycles with more than one vertex on the inside) the partial order of join-irreducible elements of the lattice of rectangular duals of $E(\graph)$ is order-isomorphic to the partial order $\porder$ on pairs $(x,i)$, where $x$ is a \emph{flippable item} of $E(\graph)$, and $i$ is a \emph{flipping number} of $x$.
A \emph{flippable item} $x$ is either a degree-four vertex of $\graph$ or an edge of $\graph$ that is not adjacent to a degree-four vertex, such that there exist two {\REL}s of $E(\graph)$ in which $x$ has different labels (when $x$ is a vertex we refer to the labels of its four adjacent edges).
Given a layout $\layout$ and a flippable item $x$, the number $f_x(\layout)$ is the number of times that $x$ has been flipped on a monotone path in the distributive lattice from its bottom element to $\layout$; this number, which we call the \emph{flipping number} of $x$ in $\layout$, is well defined, since it is independent of the path by which $\layout$ has been reached.
For every flippable item $x$, $\porder$ contains pairs $(x,i)$ for all $i$ such that there exist a layout  $\layout$ where $f_x(\layout) = i-1$. A pair $(x,i)$ is associated with the transition of $x$ from state $i$ to $i+1$.
A pair $(x,i)$ is less than a pair $(y,j)$ in the partial order if is not possible to flip $y$ for the $j$th time before flipping $x$ for the $i$th time. If $(x,i)$ and $(y,j)$ form a covering pair in $\porder$, the flippable items $x$ and $y$ belong to the same triangular face of $E(\graph)$.

As Eppstein et al. show, the layouts dual to $E(\graph)$ correspond one-for-one with partitions of the partial order $\porder$ into a lower set $L$ and an upper set $U$.
The labeling of the layout corresponding to a given partition of $\porder$ can be recovered by starting from the minimal layout and flipping each flippable item $x$ represented in the lower set $L$ of the partition $n_x+1$ times, where $(x, n_x)$ is the highest pair involving $x$ in $L$.
The downward moves that can be performed from $\layout$ correspond to the maximal elements of $L$, and the upward moves that can be performed from $\layout$ correspond to the minimal elements of~$U$.
Fig.~\ref{fig:8layouts} depicts the lattice of layouts of a 12-vertex extended dual graph, showing for each layout the corresponding partition of the partial order into two sets $L$ and~$U$.
The partial order of flippable items has at most $O(n^2)$ elements and can be constructed in time polynomial in $n$, where $n$ is the number of vertices in $\graph$~\cite{area-uni}.

\section{The lattice theory of constrained layouts}

As we describe in this section, in the case where every separating 4-cycle in $E(\graph)$ is trivial, the orientation-constrained layouts of $E(\graph)$ may themselves be described as a distributive lattice, a sublattice (although not in general a connected subgraph) of the lattice of all layouts of $E(\graph)$.

\subsection{Sublattices from quotient quasiorders}

We first consider a more general order-theoretic problem.
Let $\porder$ be a partial order and let $C$ be a (disconnected) undirected \emph{constraint graph} having the elements of $\porder$ as its vertices. We say that a partition of $\porder$ into a lower set $L$ and an upper set $U$ \emph{respects} $C$ if there does not exist an edge of $C$ that has one endpoint in $L$ and the other endpoint in $U$. As we now show, the partitions that respect $C$ may be described as a sublattice of the distributive lattice $J(\porder)$ defined via Birkhoff's representation theorem from $\porder$.

We define a quasiorder (that is, reflexive and transitive binary relation) $\qorder$ on the same elements as $\porder$, by adding pairs to the relation that cause certain elements of $\porder$ to become equivalent to each other. More precisely, form a directed graph that has the elements of $\porder$ as its vertices , and that has a directed edge from $x$ to $y$ whenever either $x\le y$ in $\porder$ or $xy$ is an edge in $C$, and define $\qorder$ to be the transitive closure of this directed graph: that is, $(x,y)$ is a relation in $\qorder$ whenever there is  a path from $x$ to $y$ in the directed graph. A subset $S$ of $\qorder$ is downward closed (respectively, upward closed) if there is no pair $(x,y)$ related in $Q$ for which $S\cap\{x,y\}=\{y\}$ (respectively, $S\cap\{x,y\}=\{x\}$).

Denote by $J(\qorder)$ the set of partitions of $\qorder$ into a downward closed and an upward closed set. Each strongly connected component of the directed graph derived from $\porder$ and $C$ corresponds to a set of elements of $\qorder$ that are all related bidirectionally to each other, and $\qorder$ induces a partial order on these strongly connected components.
Therefore, by Birkhoff's representation theorem, $J(\qorder)$ forms a distributive lattice under set unions and intersections.

\begin{lemma}
\label{lem:P-Q}
The family of partitions in $J(\qorder)$ is the family of partitions of $\porder$ into lower and upper sets that respect $C$.
\end{lemma}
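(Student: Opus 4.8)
The plan is to prove that the two families are equal as families of partitions. Since $\porder$ and $\qorder$ have the same underlying set, a partition is completely determined by its lower part $L$ (the upper part being the complement). So it suffices to show that, for a subset $L$ and its complement $U$, the pair $(L,U)$ lies in $J(\qorder)$ if and only if $L$ is a lower set of $\porder$, $U$ is an upper set of $\porder$, and $(L,U)$ respects $C$. I would prove this by establishing the two implications separately, unwinding the definitions of downward/upward closure and of the transitive closure $\qorder$.

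For the forward direction, assume $(L,U)\in J(\qorder)$. Every relation $x\le y$ of $\porder$ is, by construction, an edge $x\to y$ of the directed graph and hence a relation of $\qorder$; thus closure of $L$ under $\qorder$-predecessors immediately gives that $L$ is a $\porder$-lower set, and symmetrically $U$ is a $\porder$-upper set. For the constraint condition, note that each edge $xy$ of $C$ contributes both $x\to y$ and $y\to x$ to the directed graph, so both $(x,y)$ and $(y,x)$ are relations of $\qorder$; hence $x$ and $y$ lie in a common strongly connected component and must fall on the same side of any partition in $J(\qorder)$, so no edge of $C$ can have one endpoint in $L$ and the other in $U$. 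Therefore $(L,U)$ respects $C$.

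For the reverse direction I would use the description of $\qorder$ as the set of pairs connected by a directed path in the underlying graph. Let $(L,U)$ be a partition of $\porder$ into a lower set and an upper set that respects $C$, and suppose $(x,y)$ is a relation of $\qorder$ with $y\in L$; the goal is $x\in L$. Choose a directed path $x=z_0\to z_1\to\cdots\to z_k=y$ and argue by downward induction on the index that every $z_i\in L$. The base case is $z_k=y\in L$. In the inductive step, the edge $z_i\to z_{i+1}$ arises either from $z_i\le z_{i+1}$ in $\porder$, in which case $z_i\in L$ because $L$ is a lower set and $z_{i+1}\in L$, or from an edge $z_iz_{i+1}$ of $C$, in which case $z_i\in L$ because otherwise $z_i\in U$ and that edge of $C$ would cross the partition, contradicting that $(L,U)$ respects $C$. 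Hence $x=z_0\in L$, so $L$ is downward closed in $\qorder$; taking complements, $U$ is upward closed, and $(L,U)\in J(\qorder)$.

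The whole argument is bookkeeping once the definitions are unpacked; the only point requiring care is the passage between the \emph{local} condition that no single edge of $C$ is cut and the \emph{global} condition of closure under the transitively closed relation $\qorder$, and this is precisely what the path induction in the reverse direction handles. It is also worth observing in passing that it is the two oppositely directed edges added for each edge of $C$ that force $C$-linked elements into a common strongly connected component, which is what makes $\qorder$ descend to a genuine partial order on those components and hence lets Birkhoff's theorem apply to $J(\qorder)$.
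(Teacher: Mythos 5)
Your proof is correct and follows essentially the same route as the paper's: both directions are handled by unwinding the definition of $\qorder$ as a transitive closure, with the forward direction using $\porder\subseteq\qorder$ plus the bidirectionality of the $C$-edges, and the reverse direction locating where a directed path would have to cross from $U$ to $L$ (your backward induction along the path is just a slightly more explicit phrasing of the paper's ``some edge of the path must cross'' argument). No gaps.
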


\begin{proof}
We show the lemma by demonstrating that every partition in $J(\qorder)$ corresponds to a partition of $J(\porder)$ that respects $C$ and the other way round.

In one direction, let $(L,U)$ be a partition in $J(\qorder)$. Then, since $\qorder\supset\porder$, it follows that $(L,U)$ is also a partition of $\porder$ into a downward-closed and an upward-closed subset. Additionally, $(L,U)$ respects $C$, for if there were an edge $xy$ of $C$ with one endpoint in $L$ and the other endpoint in $U$ then one of the two pairs $(x,y)$ or $(y,x)$ would contradict the definition of being downward closed for $L$.

In the other direction, let $(L',U')$ be a partition of $\porder$ into upper and lower sets that respects $C$, let $(x,y)$ be any pair in $\qorder$, and suppose for a contradiction that $x\in U'$ and $y\in L'$. Then there exists a directed path from $x$ to $y$ in which each edge consists either of an ordered pair in $\porder$ or an edge in $C$. Since $x\in U'$ and $y\in L'$, this path must have an edge in which the first endpoint is in $U'$ and the second endpoint is in $L'$. But if this edge comes from an ordered pair in $\porder$, then $(L',U')$ is not a partition of $\porder$ into upper and lower sets, while if this edge comes from $C$ then $(L',U')$ does not respect $C$. This contradiction establishes that there can be no such pair $(x,y)$, so $(L',U')$ is a partition of $\qorder$ into upper and lower sets as we needed to establish.
\end{proof}

If $\porder$ and $C$ are given as input, we may construct $\qorder$ in polynomial time: by finding strongly connected components of $\qorder$ we may reduce it to a partial order, after which it is straightforward to list the partitions in $J(\qorder)$ in polynomial time per partition.

\subsection{Edge orientation constraints}

Consider a proper graph $\graph$ with corner assignment $E(\graph)$ and assume that each edge $e$ is given with a set of \emph{forbidden labels}, where a labels is a color-orientation combination for an edge, and let $\porder$ be the partial order whose associated distributive lattice $J(\porder)$ has its elements in one-to-one correspondence with the layouts of $E(\graph)$.
Let $x$ be the flippable item corresponding to $e$---that is either the edge itself of the degree-four vertex $e$ is adjacent to. Then in any layout $\layout$, corresponding to a partition $(L,U)\in J(\porder)$, the orientation of $e$ in $\layout$ may be determined from $i\bmod 4$, where $i$ is the largest value such that $(x,i)\in L$. Thus if we would like to exclude a certain color-orientation combination for $x$, we have find the corresponding value $k \in \Z_4$ and exclude the layouts $\layout$ such that $f_x(\layout) = k \bmod 4$  from consideration. Thus the set of flipping values for $x$ can be partitioned into \emph{forbidden} and \emph{legal} values for $x$; instead of considering color-orientation combinations of the flippable items we may consider their flipping values.
We formalize this reasoning in the following lemma.

\begin{lemma}
\label{lem:constraint-flip}
Let $E(\graph)$ be a corner assignment of a proper graph $\graph$. Let $x$ be a flippable item in $E(\graph)$, let $\layout$ be an element of the lattice of \REL s of $E(\graph)$, and let $(L,U)$ be the corresponding partition of $\porder$.

Then $\layout$  satisfies the constraints described by the forbidden labels if and only if for every flippable item $x$ one of the following is true:
\begin{itemize}
\item The highest pair involving $x$ in $L$ is $(x,i)$, where $i+1$ is not a forbidden value for $x$, or
\item $(x,0)$ is in the upper set and  $0$ is not a forbidden value for $x$.
\end{itemize}
\end{lemma}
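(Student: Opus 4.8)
The plan is to collapse both sides of the claimed equivalence into a single statement about the flipping numbers $f_x(\layout)$, and then check that they say the same thing, flippable item by flippable item. Essentially the whole content is an unwinding of the translation set up in the paragraphs just before the lemma, so I would spend most of the effort making the index bookkeeping precise.

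First I would record exactly what ``$\layout$ satisfies the constraints described by the forbidden labels'' means after the translation. A forbidden label of an edge $e$ is a colour--orientation pair; the label of $e$ in $\layout$ is determined by $f_x(\layout)\bmod 4$, where $x$ is the flippable item carrying $e$; hence each forbidden label of $e$ picks out one residue class mod $4$, i.e.\ a set of \emph{forbidden values} of the natural number $f_x(\layout)$. So ``$\layout$ satisfies the constraints'' is by definition the assertion that, for every flippable item $x$, the number $f_x(\layout)$ is not a forbidden value of $x$. It therefore suffices to show, for a fixed $x$, that ``$f_x(\layout)$ is not forbidden'' is equivalent to the disjunction of the two bullet points.

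Next I would recall how $f_x(\layout)$ is read off the partition $(L,U)$. The pairs of $\porder$ that involve $x$ form a chain $(x,0)<(x,1)<(x,2)<\cdots$, since one cannot perform the $(i{+}1)$st flip of $x$ before its $i$th flip, and $(x,0)$ is present because $x$, being flippable, can be flipped at least once starting from the minimal layout. As $L$ is downward closed, the pairs involving $x$ that lie in $L$ form an initial segment of this chain, so exactly one of two situations holds. Either that segment is nonempty with largest element $(x,i)$, and then the reconstruction rule flips $x$ a total of $i{+}1$ times on the way from the minimal layout to $\layout$, so $f_x(\layout)=i{+}1$; or the segment is empty, so $(x,0)\in U$ and $f_x(\layout)=0$.

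Finally I would match the two situations against the two bullets: in the first, ``$f_x(\layout)$ not forbidden'' reads ``$i{+}1$ not forbidden for $x$'', which is the first bullet; in the second it reads ``$0$ not forbidden for $x$'', which is the second bullet, whose hypothesis $(x,0)\in U$ is exactly the description of the second situation. Quantifying over all flippable items $x$ then gives the lemma in both directions simultaneously. I do not expect a genuine obstacle; the only delicate point is the index bookkeeping in the third paragraph---verifying that the pairs involving a fixed $x$ are totally ordered in $\porder$ so that ``the highest pair involving $x$ in $L$'' makes sense, that $(x,0)\in\porder$, and that the reconstruction rule (flip $x$ exactly $n_x{+}1$ times, or $0$ times) aligns with $f_x(\layout)=n_x{+}1$ (resp.\ $0$) with no off-by-one slip. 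Once that is pinned down the argument is a direct translation of definitions.
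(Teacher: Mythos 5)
Your proposal is correct and follows essentially the same route as the paper, which states this lemma without a separate proof as the formalization of the reasoning in the preceding paragraph (forbidden labels become forbidden residues of $f_x(\layout)\bmod 4$, and $f_x(\layout)$ is read off the partition as $n_x+1$ or $0$ via the reconstruction rule). Your extra care with the chain structure of the pairs $(x,i)$ and the off-by-one bookkeeping is exactly the right place to be careful, and it is consistent with the paper's reconstruction rule and with the statement of the lemma.
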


Lemma~\ref{lem:P-Q} may be used to show that the set of all constrained layout is a distributive lattice, and that all constrained layouts may be listed in polynomial time per layout. For technical reasons we augment $\porder$ to a new partial order $A(\porder)=\porder\cup\{\bot,\top\}$, where the new element $\bot$ lies below all other elements and the new element $\top$ lies above all other elements. Each layout of $E(\graph)$ corresponds to a partition of $\porder$ into lower and upper sets, which can be mapped into a partition of $A(\porder)$ by adding $\bot$ to the lower set and $\top$ to the upper set. The distributive lattice $J(A(\porder))$ thus has two additional elements that do not correspond to layouts of $E(\graph)$: one in which the lower set is empty and one in which the upper set is empty. We define a constraint graph $C$ having as its vertices the elements of $A(\porder)$, with edges defined as follows:
\begin{itemize}
\item If $(x,i)$ and $(x,i+1)$ are both elements of $A(\porder)$ and $i+1$ is a forbidden value for $x$, we add an edge from $(x,i)$ to $(x,i+1)$ in $C$.
\item If $(x,i)$ is an element of $A(\porder)$ but $(x,i+1)$ is not, and $i+1$ is a forbidden value for $x$, we add an edge from $(x,i)$ to $\top$ in $C$.
\item If $0$ is a forbidden value for $x$, we add an edge from $\bot$ to $(x,0)$ in $C$.
\end{itemize}

\begin{figure}[t]
\centering
\includegraphics[width=4in]{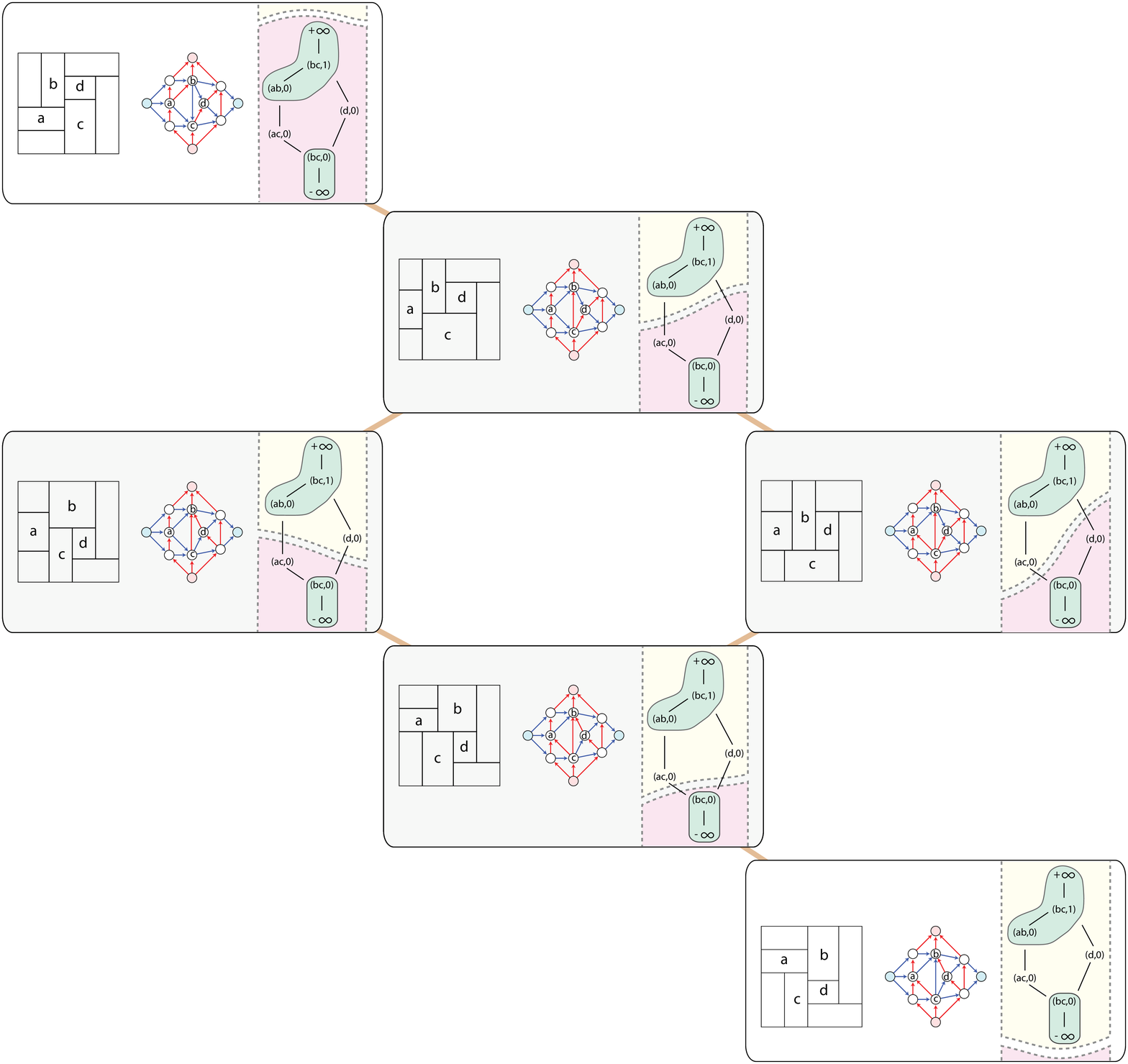}
\vspace{-0.5\baselineskip}
\caption{The family of rectangular layouts dual to a given extended graph $E(\graph)$ satisfying the constraints that the edge between rectangles $a$ and $b$ must be vertical (cannot be colored red) and that the edge between rectangles $b$ and $c$ must be horizontal (cannot be colored blue). The green regions depict strongly connected components of the associated quasiorder $\qorder$. The four central shaded elements of the lattice correspond to layouts satisfying the constraints.}
\label{fig:quasiorder}
\end{figure}

All together, this brings us to the following result:

\begin{lemma}
Let $E(\graph)$ be an extended graph without nontrivial separating 4-cycles and with a given set of forbidden orientations, and let $\qorder$ be the quasiorder formed from the transitive closure of $A(\porder)\cup C$ as described in Lemma~\ref{lem:P-Q}.
Then the elements of $J(\qorder)$ corresponding to partitions of $\qorder$ into two nonempty subsets correspond to exactly the layouts that satisfy the forbidden orientation constraints.
\end{lemma}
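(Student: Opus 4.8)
The plan is to chain two correspondences that are already available: the bijection of Lemma~\ref{lem:P-Q} between $J(\qorder)$ and the partitions of $A(\porder)$ into a lower set and an upper set that respect $C$ (reading $C$ as undirected, as in that lemma), and the bijection of Eppstein et al.~\cite{area-uni} between partitions of $\porder$ into a lower and an upper set and the layouts of $E(\graph)$. The link between the two is a remark about the adjoined elements $\bot$ and $\top$: a partition $(L,U)$ of $A(\porder)$ has both parts nonempty if and only if $\bot\in L$ and $\top\in U$, since any nonempty downward-closed subset of $A(\porder)$ must contain the global minimum $\bot$ and any nonempty upward-closed subset must contain the global maximum $\top$. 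Consequently, deleting $\bot$ and $\top$ is a bijection between the partitions of $A(\porder)$ with two nonempty parts and the partitions of $\porder$ into a lower and an upper set; the two partitions of $A(\porder)$ that this excludes (empty lower part, empty upper part) are exactly the two elements of $J(A(\porder))$ that the construction introduced on purpose and that do not name any layout. So ``two nonempty subsets'' is precisely the condition that singles out the layout-bearing partitions.

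It then suffices to prove the key claim: if $\layout$ is a layout of $E(\graph)$ with corresponding partition $(L,U)$ of $\porder$, then the extended partition $(L\cup\{\bot\},\,U\cup\{\top\})$ of $A(\porder)$ respects $C$ if and only if $\layout$ satisfies the forbidden-orientation constraints. By Lemma~\ref{lem:constraint-flip}, the latter is equivalent to a condition that is a conjunction over flippable items, so I would verify the claim one flippable item $x$ at a time. The point is that, apart from $\bot$ and $\top$, the vertex set of $C$ is partitioned into the chains $(x,0)<(x,1)<\cdots$ of $\porder$, and every edge of $C$ lies inside one such chain or joins $\bot$ or $\top$ to one such chain, so the analysis really does decouple across items. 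Fix $x$. Because $L$ is downward closed, the pairs of $x$ that lie in $L$ form a prefix of $x$'s chain, say $(x,0),\dots,(x,i)$, with all later pairs of $x$ (if any) in $U$; write ``$x$ has no pair in $L$'' for the remaining possibility. In the first situation, the only edge of $C$ that touches $x$ and can possibly have its two endpoints on opposite sides of the partition is the single boundary edge at position $i$: the internal edge from $(x,i)$ to $(x,i+1)$ if $(x,i+1)\in A(\porder)$, or else the edge from $(x,i)$ to $\top$. By the definition of $C$, that boundary edge is present exactly when $i+1$ is a forbidden value for $x$; and when it is present it does cross, since $(x,i)\in L$ while its successor in the chain, or $\top$, lies in $U$. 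Hence the partition respects all $C$-edges touching $x$ iff $i+1$ is not forbidden for $x$, which is the first alternative of Lemma~\ref{lem:constraint-flip}. In the second situation, every pair of $x$ lies in $U$, so no internal $x$-edge and no $x$-to-$\top$ edge can cross, and the only candidate crossing edge is the one from $\bot$ to $(x,0)$; by the definition of $C$ it is present exactly when $0$ is forbidden for $x$, and when present it crosses since $\bot\in L$ and $(x,0)\in U$. So here the partition respects the $C$-edges touching $x$ iff $0$ is not forbidden for $x$, the second alternative of Lemma~\ref{lem:constraint-flip}. Taking the conjunction over all $x$ proves the claim.

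Combining everything: Lemma~\ref{lem:P-Q} identifies $J(\qorder)$ with the partitions of $A(\porder)$ respecting $C$; restricting to those with two nonempty parts identifies that family (by the first paragraph) with the partitions of $\porder$ whose extension respects $C$; and the key claim together with Lemma~\ref{lem:constraint-flip} identifies that family with the layouts of $E(\graph)$ that satisfy the forbidden-orientation constraints. The genuinely delicate part of the write-up is not any single step but the bookkeeping in the item-by-item analysis: matching the index $i$ of the highest pair $(x,i)\in L$ to the flipping number $f_x(\layout)$ with the correct off-by-one, and handling the two boundary cases --- an item all of whose pairs lie in $L$ (where the edge to $\top$ enters) and an item none of whose pairs lie in $L$ (where the edge from $\bot$ enters) --- which is exactly why $\bot$ and $\top$ were adjoined to $\porder$ in the first place.
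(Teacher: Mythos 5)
Your proof is correct and follows essentially the same route as the paper's: the paper's own argument is a three-sentence combination of Lemma~\ref{lem:P-Q}, Lemma~\ref{lem:constraint-flip} together with the definition of $C$, and the observation that the two excluded partitions are exactly those where $\top$ falls in the lower set or $\bot$ in the upper set. Your item-by-item verification that respecting the edges of $C$ is equivalent to the disjunction in Lemma~\ref{lem:constraint-flip} simply fills in the step the paper leaves implicit, and it does so correctly.
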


\begin{proof}
\label{lem:Q-is-constrained}
By Lemma~\ref{lem:constraint-flip} and the definition of $C$, a partition in $J(\porder)$ corresponds to a constrained layout if and only if it respects each of the edges in $C$. By Lemma~\ref{lem:P-Q}, the elements of $J(\qorder)$ correspond to partitions of $A(\porder)$ that respect $C$. And a partition of $A(\porder)$ corresponds to an element of $J(\porder)$ if and only if its lower set does not contain $\top$ and its upper set does not contain $\bot$.
\end{proof}

\begin{cor}
Let $E(\graph)$ be an extended graph without nontrivial separating 4-cycles and with a given set of forbidden orientations.
There exists a constrained layout for $E(\graph)$ if and only if there exists more than one strongly connected component in $\qorder$.
\end{cor}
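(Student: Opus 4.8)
The plan is to read the corollary as an essentially immediate consequence of the preceding lemma, once combined with one elementary fact about quasiorders: a quasiorder $\qorder$ admits a partition into two nonempty parts, one downward closed and one upward closed, if and only if it has more than one strongly connected component. The preceding lemma already identifies the layouts satisfying the forbidden orientation constraints with exactly those elements of $J(\qorder)$ whose two parts are both nonempty, and Lemma~\ref{lem:P-Q} together with the remarks preceding it tell us that $\qorder$ induces a genuine partial order on its strongly connected components. So the whole task reduces to proving the quasiorder fact for $\qorder$.

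For the ``only if'' direction I would argue the contrapositive: suppose $\qorder$ has a single strongly connected component, so that $x\le y$ and $y\le x$ in $\qorder$ for every pair $x,y$. If $(L,U)$ is any partition of $\qorder$ into a downward closed $L$ and an upward closed $U$ with $L\neq\emptyset$, then picking $x\in L$ and using $y\le x$ for all $y$, downward closure forces every $y$ into $L$; hence $L$ is all of $\qorder$ and $U=\emptyset$. Thus $J(\qorder)$ has no partition into two nonempty parts, and by the preceding lemma there is no constrained layout. For the ``if'' direction, suppose $\qorder$ has at least two strongly connected components. Let $M$ be a component minimal in the induced partial order on components --- the component of $\bot$ serves, since $\bot$ lies below every element of $A(\porder)$ and hence its component lies below every other. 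Let $S$ be the set of elements of $\qorder$ lying in $M$; it is nonempty, proper (there are at least two components), and downward closed, because if $y\in S$ and $(x,y)$ is a relation of $\qorder$ then the component of $x$ is $\le M$, hence equals $M$ by minimality, so $x\in S$. Its complement is then a nonempty upward closed set, so $(S,\qorder\setminus S)$ is an element of $J(\qorder)$ with both parts nonempty, which by the preceding lemma corresponds to a layout satisfying the constraints.

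I do not expect a substantive obstacle here; the corollary is largely a repackaging of the preceding lemma. The one point that warrants a line of care is the bookkeeping at the adjoined extremal elements: one should confirm that the nontrivial partition produced above really is among the partitions counted by the preceding lemma, i.e.\ that $\top\notin S$ and $\bot\notin\qorder\setminus S$. This is automatic, since $S$ is the minimal component and $\bot\le z\le\top$ for every $z\in A(\porder)$: were $\bot$ and $\top$ to lie in a common component, every element would lie in it, contradicting the hypothesis of at least two components. Hence $\bot\in S$ and $\top\notin S$, and the partition is of the required form.
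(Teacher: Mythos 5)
Your proof is correct and matches the paper's intended reasoning: the paper states this corollary without proof as an immediate consequence of the preceding lemma, and you have simply filled in the standard order-theoretic fact (a finite quasiorder has a partition into nonempty lower and upper sets iff it has more than one strongly connected component, witnessed by the minimal component containing $\bot$). The check that $\bot$ and $\top$ fall in distinct components whenever there are at least two is the right detail to verify, and your argument for it is sound.
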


\begin{cor}
The existence of a constrained layout for a given extended graph $E(G)$ without nontrivial separating 4-cycles can be proved or disproved in polynomial time.
\end{cor}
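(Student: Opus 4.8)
The plan is to package the order-theoretic machinery developed in this section into an algorithm and bound its running time. By the preceding corollary, a constrained layout for $E(G)$ exists if and only if the quasiorder $\qorder$ has more than one strongly connected component, so it suffices to exhibit a polynomial-time procedure that builds $\qorder$ and counts its strongly connected components.

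First I would invoke the result of Eppstein et al.~\cite{area-uni}, quoted at the end of Section~2, that the partial order $\porder$ of flippable items and flipping numbers has $O(n^2)$ elements and is constructible in time polynomial in $n$; adjoining the two artificial elements $\bot$ and $\top$ to form $A(\porder)$ costs only constant additional time. Next I would translate the forbidden orientations into the constraint graph $C$. For each edge $e$ of $\graph$ the associated flippable item $x$ and the correspondence between $i \bmod 4$ and the color--orientation of $e$ are determined in polynomial time, so for each of the $O(n^2)$ pairs $(x,i)$ one can decide in polynomial time whether $i+1$ (or, in the base case, $0$) is a forbidden value for $x$, and hence which of the three kinds of edge of $C$ prescribed before this lemma should be added. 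This produces $C$, and therefore the directed graph on vertex set $A(\porder)$ whose arcs are the relations of $A(\porder)$ together with the edges of $C$, in polynomial time and of polynomial size.

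Finally I would run a standard linear-time strongly-connected-components algorithm on this directed graph and test whether the number of components exceeds one. By the preceding corollary this test decides the existence of a constrained layout (and, when one exists, the construction in the lemma identifying $J(\qorder)$ with the constrained layouts exhibits an explicit witness), and every step above runs in time polynomial in $n$, which establishes the claim.

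I expect no genuinely hard step: the corollary is a repackaging of Lemma~\ref{lem:P-Q}, Lemma~\ref{lem:constraint-flip}, and the lemma identifying $J(\qorder)$ with the constrained layouts, together with the polynomial bound already cited for constructing $\porder$. The only point that deserves a line of care is verifying that the passage from forbidden labels to forbidden flipping values, and thence to the edges of $C$, is genuinely polynomial; this follows immediately from the $O(n^2)$ bound on $|\porder|$ and from the fact, recalled just before Lemma~\ref{lem:constraint-flip}, that the orientation of each edge in a layout is read off from $i \bmod 4$.
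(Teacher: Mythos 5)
Your proposal is correct and follows exactly the route the paper intends (and leaves implicit): construct $A(\porder)$ and the constraint graph $C$ in polynomial time using the $O(n^2)$ bound on $\porder$, compute strongly connected components, and apply the preceding corollary's criterion that a constrained layout exists if and only if $\qorder$ has more than one component. Nothing is missing.
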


\begin{cor}
All constrained layouts for a given extended graph $E(G)$ without nontrivial separating 4-cycles can be listed in polynomial time per layout.
\end{cor}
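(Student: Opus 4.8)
The plan is to read the statement off the lemma immediately preceding these corollaries, together with a standard polynomial-delay enumeration of the lower sets of a finite partial order. That lemma (via Lemmas~\ref{lem:P-Q} and~\ref{lem:constraint-flip}) puts the constraint-satisfying layouts of $E(G)$ in bijection with the elements of $J(\qorder)$ whose lower and upper parts are both nonempty, where $\qorder$ is the transitive closure of $A(\porder)\cup C$. So it suffices to (i) construct $\qorder$, (ii) enumerate the partitions in $J(\qorder)$ with polynomial delay while suppressing the at most two partitions in which one side is empty, and (iii) turn each surviving partition into an actual regular edge labeling in polynomial time.

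For step (i), recall from Section~\ref{sec:def} that $\porder$ has $O(n^2)$ elements and is computable in time polynomial in $n$; adjoining $\bot$ and $\top$ and the edges of $C$ (read directly off the given forbidden labels) is polynomial, and contracting the strongly connected components of the resulting directed graph and taking a transitive reduction gives, in polynomial time, a partial order $\qorder'$ on the components with $J(\qorder)\cong J(\qorder')$. For step (iii), given a partition $(L,U)$ of $\qorder$ --- equivalently of $A(\porder)$, equivalently (after deleting $\bot$ and $\top$) of $\porder$ --- we recover the layout exactly as recalled in Section~\ref{sec:def}: start from the minimal layout and, for each flippable item $x$, perform $n_x+1$ counterclockwise moves at $x$, where $(x,n_x)$ is the highest pair involving $x$ that lies in $L$, and no move at all if there is no such pair. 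This is $O(n^2)$ moves, hence polynomial time, per layout.

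Step (ii) is the only part requiring a genuine argument, and it is the expected main obstacle: we need that the family of lower sets of a finite poset can be listed with delay polynomial in the size of the poset, not merely in output-polynomial total time. This is supplied by reverse search on the Hasse diagram of $J(\qorder')$, which by Birkhoff's theorem is precisely the lattice of lower sets of $\qorder'$: fix a linear extension of $\qorder'$, let each node of the search tree be a lower set together with a prefix of decisions about which elements it contains, and branch on the next element subject to the downward-closure constraint; every node is processed in time polynomial in $|\qorder'|$ and every leaf is a distinct lower set, so consecutive outputs are separated by polynomial time. The two partitions we must exclude --- empty lower set, empty upper set --- are exactly the global minimum and maximum of $J(\qorder')$, so skipping them affects the delay by at most a constant factor. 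Combining the three steps yields an algorithm that lists all constrained layouts of $E(G)$ with polynomial delay, which in particular is polynomial time per layout.
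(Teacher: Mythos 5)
Your proposal is correct and follows the paper's own route: reduce $\qorder$ to a partial order on its strongly connected components, enumerate the lower-set/upper-set partitions of that poset with polynomial delay (a step the paper dismisses as ``straightforward'' and which you rightly fill in with a reverse-search argument), discard the two trivial partitions, and convert each remaining partition to a layout by the flip-count recipe. The only substantive difference is that you supply the enumeration details the paper omits.
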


Figure~\ref{fig:quasiorder} depicts the sublattice resulting from these constructions for the example from Figure~\ref{fig:8layouts}, with constraints on the orientations of two of the layout edges.

\subsection{Junction orientation constraints}

So far we have only considered forbidding certain edge labels. However the method above can easily be extended to different types of constraints. For example, consider two elements of $\porder$  $(x,i)$ and $(y,j)$ that are a covering pair in $\porder$; this implies that $x$ and $y$ are two of the three flippable items surrounding unique a T-junction of the layouts dual to $E(\graph)$. Forcing $(x,i)$ and $(y,j)$ to be equivalent by adding an edge from $(x,i)$ to $(y,j)$ in the constraint graph $C$ can be used for more general constraints: rather than disallowing one or more of the four orientations for any single flippable item, we can disallow one or more of the twelve orientations of any T-junction. For instance, by adding equivalences of this type we could force one of the three rectangles at the T-junction to be the one with the 180-degree angle.

Any internal T-junction of a layout for $E(\graph)$ (dual to a triangle of $\graph$) has 12 potential orientations: each of its three rectangles can be the one with the 180-degree angle, and with that choice fixed there remain four choices for the orientation of the junction. In terms of the {\REL}, any triangle of $\graph$ may be colored and oriented in any of 12 different ways. For a given covering pair $(x,i)$ and $(y,j)$, let $C_{x,y}^{i,j}$ denote the set of edges between pairs $(x,i+4k)$ and $(y,j+4k)$ for all possible integer values of $k$, together with an edge from $\bot$ to $(y,0)$ if $j\bmod 4=0$ and an edge from $(x,i+4k)$ to $\top$ if $i+4k$ is the largest value of $i'$ such that $(x,i')$ belongs to $\porder$. Any T-junction is associated with 12 of these edge sets, as there are three ways of choosing a pair of adjacent flippable items and four ways of choosing values of $i$ and $j$ (mod 4) that lead to covering pairs. Including any one of these edge sets in the constraint graph $C$ corresponds to forbidding one of the 12 potential orientations of the T-junction.

Thus, Lemma~\ref{lem:Q-is-constrained} and its corollaries may be applied without change to dual graphs $E(\graph)$ with junction orientation constraints as well as edge orientation constraints, as long as $E(\graph)$ has no nontrivial separating 4-cycles.

\section{Constrained layouts for unconstrained dual graphs}

\begin{figure}[b]
\centering\includegraphics{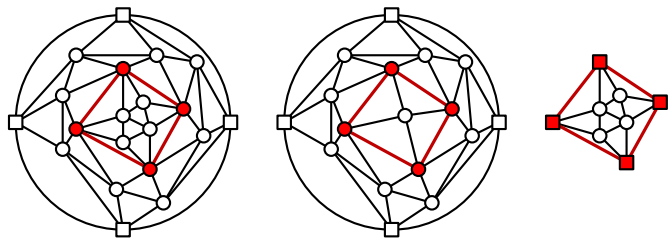}
\vspace{-.5\baselineskip}
\caption{An extended graph with a nontrivial separating four-cycle (left), its outer separation component (center), and its inner separation component (right). From~\cite{area-uni}.}
\label{fig:separation}
\end{figure}

Proper graphs with nontrivial separating 4-cycles still have finite distributive lattices of layouts, but
it is no longer possible to translate orientation constraints into equivalences between members of an underlying partial order. The reason is that, for a graph without trivial separating 4-cycles, the orientation of a feature of the layout changes only for a flip involving that feature, so that the orientation may be determined from the flip count modulo four.  For more general graphs the orientation of a feature is changed not only for flips directly associated with that feature, but also for flips associated with larger 4-cycles that contain the feature, so the flip count of the feature no longer determines its orientation. For this reason, as in\cite{area-uni}, we treat general proper graphs by decomposing them into \emph{minimal separation components} with respect to separating 4-cycles and piecing together solutions found separately within each of these components.

For each separating four-cycle $C$ in a proper graph $\graph$ with a corner assignment $E(\graph)$ consider two minors of $\graph$ defined as follows.
The \emph{inner separation component} of $C$ is a graph $\graph_C$ and its extended graph $E(\graph_C)$, where $\graph_C$ is the subgraph of $\graph$ induced by the vertices inside $C$ and $E(\graph_C)$ adds the four vertices of the cycle as corners of the extended graph. The \emph{outer separation component} of $C$ is a graph formed by contracting the interior of $C$ into a single supervertex. A \emph{minimal separation component} of $\graph$ is a minor of $\graph$ formed by repeatedly splitting larger graphs into separation components until no nontrivial separating four-cycles remain. A partition tree of $E(\graph)$ into minimal separation components may be found in linear time~\cite{area-uni}.

We use the representation of a graph as a tree of minimal separation components in our search for constrained layouts for $\graph$. We first consider each such minimal component separately for every possible mapping of vertices of $C$ to $\{l,t,r,b\}$ (we call these mappings the \emph{orientation} of $E(\graph)$). Different orientations imply different flipping values of forbidden labels for the given constraint function, since the flipping numbers are defined with respect to the orientation of $E(\graph)$.
Having that in mind we are going to test the graph $E(\graph)$ for existence of a constrained layout in the following way:

For each piece in a bottom-up traversal of the decomposition tree and for each
orientation of the corners of the piece:

\begin{enumerate}
\item Find the partial order $\porder$ describing the layouts of the piece

\item Translate the orientation constraints within the piece into a constraint graph on the augmented partial order $A(\porder)$.

\item Compute the strongly connected components of the union of $A(\porder)$ with the constraint graph, and form a binary relation that is a subset of $\qorder$ and that includes all covering relations in $\qorder$ by finding the components containing each pair of elements in each covering relation in $\porder$.

\item Translate the existence or nonexistence of a layout into a constraint on the label of the corresponding degree-4 vertex in the parent piece of the decomposition. That is, if the constrained layout for a given orientation of $E(\graph')$ does not exist, forbid (in the parent piece of the decomposition) the label of the degree-four vertex corresponding to that orientation.
\end{enumerate}

\noindent
If the algorithm above confirms the existence of a constrained layout, we may list all layouts satisfying the constraints as follows. For each piece in the decomposition tree, in top-down order:

\begin{enumerate}
\item List all lower sets of the corresponding quasiorder $\qorder$.

\item Translate each lower set into a layout for that piece.

\item For each layout, and each child of the piece in the
decomposition tree, recursively list the layouts in which the child's corner orientation
matches the labeling of the corresponding degree-four vertex of the outer layout.

\item Glue the inner and outer layouts together.
\end{enumerate}

\begin{theorem}
The existence of a constrained layout for a proper graph $\graph$ can be found in polynomial time in $|\graph|$. The set of all constrained layouts for $graph$ can be found in polynomial time per layout.
\end{theorem}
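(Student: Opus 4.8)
The plan is to reduce the general case to the separating-4-cycle-free case already settled by the corollaries to Lemma~\ref{lem:Q-is-constrained}, using the decomposition of $\graph$ into minimal separation components. First I would recall from~\cite{area-uni} that a proper graph with corner assignment $E(\graph)$ admits, in linear time, a decomposition tree whose nodes are minimal separation components (none of which has a nontrivial separating 4-cycle), whose total size is $O(|\graph|)$, and for which every edge and every triangle of $\graph$ lies in the interior of exactly one node; a child node appears inside its parent as a degree-four supervertex obtained by contracting the child's interior, and the only features shared between a node and a child are the four vertices and four edges of the separating cycle along which they were split. Assigning each edge- or junction-orientation constraint to the unique node whose interior contains the constrained feature turns the global constraint set into a family of local constraint sets, one per node. (If $\graph$ is given without a corner assignment, we repeat the whole procedure over the polynomially many possible assignments.)

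For the existence question I would process the tree bottom-up. Once all children of a node have been processed, I run the algorithm behind Lemma~\ref{lem:Q-is-constrained} for that node, once for each of the constantly many valid assignments of the node's four boundary vertices to $\{l,t,r,b\}$, with the node's local constraints augmented by propagated constraints: whenever a child failed to admit a constrained layout for some boundary assignment, the corresponding label of that child's degree-four supervertex is declared forbidden in the current node. The node \emph{succeeds} for a given boundary assignment if the resulting quasiorder $\qorder$ has more than one strongly connected component, and this one bit per assignment is exactly what is passed to the parent; the root succeeds for some assignment if and only if $\graph$ has a constrained layout. Correctness is an induction on the tree: restricting a constrained layout of $\graph$ to a node and reading off the induced boundary assignment yields a constrained layout of that node for that assignment, and conversely a compatible family of per-node layouts glues back into a constrained layout of $\graph$ because the gluing operation and the compatibility condition (agreement of supervertex label with child boundary assignment) are exactly those of~\cite{area-uni}. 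Since each node has $O(|\graph|)$ elements in its partial order, a constant number of boundary assignments, a polynomial-time quasiorder construction, and the tree has $O(|\graph|)$ nodes, the whole pass is polynomial.

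To list all constrained layouts once existence is confirmed, I would recurse top-down. For the root, and recursively for each node reached with a boundary assignment the bottom-up pass marked successful, enumerate the lower sets of the node's quasiorder $\qorder$ with polynomial delay (the standard enumeration of the two-block partitions of a finite partial order on the strongly connected components of $\qorder$), translate each lower set into a layout of the node as in Lemma~\ref{lem:constraint-flip}, and for each such layout and each child recurse on that child with the boundary assignment dictated by the label the layout gives to the child's supervertex, finally gluing the returned child layouts into the node's layout. Because the bottom-up pass guarantees that every assignment we recurse into is one for which the child succeeds, every recursive call produces at least one layout, so no unbounded amount of work is done between consecutive outputs; combined with the polynomial-delay enumeration at each node and the $O(|\graph|)$ depth and size of the tree, this yields polynomial time per listed layout.

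The step I expect to be the main obstacle is the correctness of the constraint decomposition and of the gluing: one must verify that every edge and every triangle of $\graph$ falls in the interior of a single minimal separation component, so that no constraint is split or double-counted; that the colors and orientations of a boundary 4-cycle's edges and junctions are fully determined by the chosen assignment of its vertices to $\{l,t,r,b\}$, so that matching supervertex labels is the right compatibility notion; and that the ``forbid the supervertex label'' propagation is tight in both directions, i.e.\ the set of supervertex labels realizable in some layout of a parent equals the set of boundary assignments for which the child succeeds. These facts rest on the structural results of~\cite{area-uni}; the remaining bookkeeping — linear size of the tree, constant number of boundary assignments, and polynomial-delay enumeration of lower sets — is routine.
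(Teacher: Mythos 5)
Your proposal matches the paper's own argument essentially step for step: the same bottom-up pass over the decomposition tree into minimal separation components, trying each corner orientation, propagating failures as forbidden labels on the child's degree-four supervertex, and the same top-down lower-set enumeration with gluing for listing. The correctness concerns you flag (localization of constraints to pieces and tightness of the gluing) are likewise deferred to the structural results of the earlier paper, exactly as the authors do.
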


As described in \cite{area-uni}, the partial order $\porder$ describing the layouts of each piece has a number of elements and covering pairs that is quadratic in the number of vertices in the dual graph of the piece, and a description of this partial order in terms of its covering pairs may be found in quadratic time. The strongly connected component calculation within the algorithm takes time linear in the size of $\porder$, and therefore the overall algorithm for testing the existence of a constrained layout takes time $O(n^2)$, where $n$ is the number of vertices in the given dual graph.

\section{Finding area-universal constrained layouts}
Our previous work \cite{area-uni} included an algorithm for finding area-universal layouts that is fixed-parameter tractable, with the maximum number of separating four-cycles in any piece of the separation component decomposition as its parameter. It is not known whether this problem may be solved in polynomial time for arbitrary graphs. But as we outline in this section, the same fixed parameter tractability result holds for a combination of the constraints from that paper and from this one: the problem of searching for an area-universal layout with constrained orientations.

These layouts correspond to partitions of $\porder$ such that all flippable items that are minimal elements of the upper set and all flippable items that are maximal items of the lower set are all degree-four vertices. A brute force algorithm can find these partitions by looking at all sets of degree-four vertices as candidates for extreme sets for partitions of $\porder$. Instead, in our previous work on this problem we observed that a flippable edge is not \emph{free} in a layout (i.e. cannot be flipped in the layout), if an only if it is \emph{fixed} by a so-called \emph{stretched pair}. A stretched pair is a pair of two degree-four vertices $(v,w)$, such that on any monotone path up from $\layout$ $w$ is flipped before $w$, and on any monotone path down from $\layout$ $v$ is flipped before $w$. If $f_v(\layout)$ is the maximal flipping value of $v$, then we declare $(v,\emptyset)$ to be stretched (where $\emptyset$ is a special symbol) and declare $(\emptyset, w)$ to be a stretched pair if $f_w(\layout) = 0$.  An edge is \emph{fixed} by a stretched pair $(v,w)$ in $\layout$ if $x$ if every monotone path up from $\layout$ moves $w$ before moving $x$, and every monotone path down from $\layout$ moves $v$ before moving $x$. So instead of looking for extreme sets, we could check every set of degree-four vertices for existence of a layout in which every pair in the set is stretched, and check which edges each such a set fixes. Each set $H$ of pairs can be checked for stretchability by starting at the bottom of the lattice and flipping the corresponding items of $\porder$ up the lattice until every pair in $H$ is stretched or the maximal elements of the lattice is reached. If there are $k$ degree-four vertices (or equivalently separating 4-cycles) in a piece, there are $2^{O(k^2)}$ sets of stretched pairs we need consider, each of which takes polynomial time to test, so the overall algorithm for searching for unconstrained area-universal layouts takes time $2^{O(k^2)} n^{O(1)}$.

For constrained layouts, a similar approach works. Within each piece of the separation decomposition, we consider $2^{O(k^2)}$ sets of stretched pairs in $\porder$, as before. However, to test one of these sets, we perform a monotonic sequence of flips in $J(\qorder)$, at each point either flipping an element of $\qorder$ that contains the upper element of a pair that should be stretched, or performing a flip that is a necessary prerequisite to flipping such an upper element. Eventually, this process will either reach an area-universal layout for the piece or the top element of the lattice; in the latter case, no area-universal layout having that pattern of stretched pairs exists. By testing all sets of stretched pairs, we may find whether an area-universal layout matching the constraints exists for any corner coloring of any piece in the separation decomposition. These constrained layouts for individual pieces can then be combined by the same tree traversal of the separation decomposition tree that we used in the previous section, due to the observation from \cite{area-uni} that a layout is area-universal if and only if the derived layout within each of its separation components is area-universal. The running time for this fixed-parameter tractable algorithm is the same as in~\cite{area-uni}.

\section{Conclusions and open problems}\label{sec:disc}

We have provided efficient algorithms for finding rectangular layouts with orientation constraints on the features of the constraints, and we have outlined how to combine our approach with the previous algorithms for finding area-universal layouts so that we can find orientation-constrained area-universal layouts as efficiently as we can solve the unconstrained problem.

An important problem in the generation of rectangular layouts with special properties, that has resisted our lattice-theoretic approach, is the generation of sliceable layouts. If we are given a graph $\graph$, can we determine whether it is the graph of a sliceable layout in polynomial time? Additionally, although our algorithms are polynomial time, there seems no reason intrinsic to the problem for them to take as much time as they do: can we achieve subquadratic time bounds for finding orientation-constrained layouts, perhaps by using an algorithm based more on the special features of the problem and less on general ideas from lattice theory?

Moving beyond layouts, there are several other important combinatorial constructions that may be represented using finite distributive lattices, notably the set of matchings and the set of spanning trees of a planar graph, and certain sets of orientations of arbitrary graphs~\cite{propp93}. It would be of interest to investigate whether our approach of combining the underlying partial order of a lattice with a constraint graph produces useful versions of constrained matching and constrained spanning tree problems, and whether other algorithms that have been developed in the more general context of distributive finite lattices~\cite{propp97} might fruitfully be applied to lattices of rectangular layouts.

\section*{Acknowledgements}

Work of D. Eppstein was supported in part by NSF grant
0830403 and by the Office of Naval Research under grant
N00014-08-1-1015.
{\small\raggedright
\bibliographystyle{abbrv}
\bibliography{rectilinear}}

\begin{thebibliography}{10}

\bibitem{Bir-DMJ-37}
G.~Birkhoff.
\newblock {Rings of sets}.
\newblock {\em Duke Mathematical Journal}, 3(3):443{--}454, 1937.

\bibitem{Earl1979}
C.~F. Earl and L.~J. March.
\newblock {Architectural applications of graph theory}.
\newblock In R.~Wilson and L.~Beineke, editors, {\em Applications of Graph
  Theory}, pages 327{--}355. Academic Press, London, 1979.

\bibitem{area-uni}
D.~Eppstein, E.~Mumford, B.~Speckmann, and K.~Verbeek.
\newblock {Area-universal rectangular layouts}.
\newblock In {\em Proc. 25th ACM Symp. Computational Geometry}, 2009.
\newblock To appear.

\bibitem{Fus-GD-05}
{\'E}.~Fusy.
\newblock {Transversal structures on triangulations, with application to
  straight-line drawing}.
\newblock In P.~Healy and N.~S. Nikolov, editors, {\em Graph Drawing: 13th Int.
  Symp., GD 2005}, volume 3843 of {\em Lecture Notes in Computer Science},
  pages 177{--}188. Springer-Verlag, 2006.

\bibitem{Fus-DM-08}
{\'E}.~Fusy.
\newblock {Transversal structures on triangulations: A combinatorial study and
  straight-line drawings}.
\newblock {\em Discrete Mathematics}, 2009.
\newblock To appear.

\bibitem{KanHe-TCS-97}
G.~Kant and X.~He.
\newblock {Regular edge labeling of 4-connected plane graphs and its
  applications in graph drawing problems}.
\newblock {\em Theoretical Computer Science}, 172(1{--}2):175{--}193, 1997.

\bibitem{KozKin-Nw-85}
K.~Ko{\'z}mi{\'n}ski and E.~Kinnen.
\newblock {Rectangular duals of planar graphs}.
\newblock {\em Networks}, 5(2):145{--}157, 1985.

\bibitem{LiaLuYen-Algs-03}
C.-C. Liao, H.-I. Lu, and H.-C. Yen.
\newblock {Compact floor-planning via orderly spanning trees}.
\newblock {\em Journal of Algorithms}, 48:441{--}451, 2003.

\bibitem{propp93}
J.~Propp.
\newblock {Lattice structure for orientations of graphs}.
\newblock Electronic preprint arxiv:math/0209005, 1993.

\bibitem{propp97}
J.~Propp.
\newblock {Generating random elements of finite distributive lattices}.
\newblock {\em Electronic J. Combinatorics}, 4(2):R15, 1997.

\bibitem{Rai-GR-34}
E.~Raisz.
\newblock {The rectangular statistical cartogram}.
\newblock {\em Geographical Review}, 24(2):292{--}296, 1934.

\bibitem{Rinsma1988}
I.~Rinsma.
\newblock {Rectangular and orthogonal floorplans with required rooms areas and
  tree adjacency}.
\newblock {\em Environment and Planning B: Planning and Design}, 15:111{--}118,
  1988.

\bibitem{TanChe-ISCAS-90}
H.~Tang and W.-K. Chen.
\newblock {Generation of rectangular duals of a planar triangulated graph by
  elementary transformations}.
\newblock In {\em IEEE Int. Symp. Circuits and Systems}, volume~4, pages
  2857{--}2860, 1990.

\bibitem{KreSpe-CGTA-07}
M.~van Kreveld and B.~Speckmann.
\newblock {On rectangular cartograms}.
\newblock {\em Computational Geometry: Theory and Applications},
  37(3):175{--}187, 2007.

\end{thebibliography}

%
\end{document}